\DeclareMathOperator*{\argmin}{arg\,min}
\newtheorem{observation}{Observation}
\newenvironment{proof}{\hspace{10pt}\textit{Proof}:}{\hfill$\square$}
\begin{document}
	
	

\title{Critical Time Windows for Renewable Resource Complementarity Assessment}

\author[ulgm]{Mathias~Berger\corref{cor1}\fnref{fn1}}
\ead{mathias.berger@uliege.be}
\author[ulgm]{David~Radu\fnref{fn1}}
\author[ulgm]{Raphaël~Fonteneau}
\author[ulgm]{Robin~Henry}
\author[ulgm]{Mevludin~Glavic}
\author[ulgg]{Xavier~Fettweis}
\author[rte]{Marc~Le~Du}
\author[rte]{Patrick~Panciatici}
\author[rte]{Lucian~Balea}
\author[ulgm]{Damien~Ernst}

\cortext[cor1]{Corresponding author.}
\fntext[fn1]{Equally contributing authors.}
\address[ulgm]{Department of Electrical Engineering and Computer Science, University of Liège, Allée de la Découverte 10, 4000 Liège, Belgium}
\address[ulgg]{Laboratory of Climatology, Department of Geography, University of Liège, Belgium}
\address[rte]{R\&D Department, Réseau de Transport d'Electricité (RTE), France}

\begin{abstract}
This paper proposes a systematic framework to assess the complementarity of renewable resources over arbitrary geographical scopes and temporal scales which is particularly well-suited to exploit very large data sets of climatological data. The concept of critical time windows is introduced, and a spatio-temporal criticality indicator is proposed, consisting in a parametrised family of scalar indicators quantifying the complementarity between renewable resources in both space and time. The criticality indicator is leveraged to devise a family of optimisation problems identifying sets of locations with maximum complementarity under arbitrary geographical deployment constraints. The applicability of the framework is shown in a case study investigating the complementarity between the wind regimes in continental western Europe and southern Greenland, and its usefulness in a power system planning context is demonstrated. Besides showing that the occurrence of low wind power production events can be significantly reduced on a regional scale by exploiting diversity in local wind patterns, results highlight the fact that aggregating wind power production sites located on different continents may result in a lower occurrence of system-wide low wind power production events and indicate potential benefits of intercontinental electrical interconnections.

\end{abstract}

\begin{keyword}
variable renewable energy, smoothing effect, resource complementarity, critical time windows, criticality indicator, power system planning.
\end{keyword}

\maketitle
	

\section{Introduction}
\label{sec:introduction}
The recent large-scale deployment of variable renewable energy (VRE) technologies to produce electricity has resulted in novel operational challenges in the power system, as their locally intermittent nature make it difficult to reliably supply inflexible loads on time scales ranging from seconds to hours and days. Solution avenues addressing these issues range from increasing storage capacity in the system to developing transmission capacity and interconnections on a continental or global scale to exploit the apparent smoothing effect occurring in renewable production patterns aggregated over vast geographical areas and hence reduce the variability of the aggregate production \cite{GiebelPhD, LouieChapter}, as well as to take advantage of time differences in production and consumption profiles in different regions of the world. The latter approach has received increasing attention in recent years \cite{chatzivasileiadis2013global, chatzivasileiadis2017global}, but few practical quantitative methods to assess its real benefits have been proposed.

This paper proposes a methodology to assess the complementarity of renewable resources over arbitrary geographical scopes and time scales, with an eye towards the planning of large international interconnections. The methodology considers a set of locations for which simultaneously and uniformly sampled time series describing stochastic signals of interest are available, from which a set of time windows of arbitrary length is extracted. Various resource quality criteria are then invoked to identify time windows characteristic of specific events deemed critical for each location considered, such as an averaged capacity factor value smaller than a given threshold over the window duration. The concept of critical time windows is then leveraged to derive a criticality indicator quantifying the complementarity between locations in terms of the proportion of critical time windows common to a pre-specified proportion of these locations. This framework is further utilised to derive optimisation problems identifying the set of locations having the best complementarity within a given geographical area, possibly considering constraints on the number of sites that can be selected in sub-regions of interest. Compared to traditional statistical or frequency domain approaches evaluating the complementarity between renewable resources and the properties of their aggregate, this methodology exploits a precise description of signals in the time-domain and keeps track of their chronology, making it ideally-suited to accurately capture extreme system-wide events having a substantial impact on system design, especially on storage and transmission capacity sizing, such as long periods over which very little renewable generation is available, which cannot usually be handled by storage alone. It also easily lends itself to optimal generation site selection, and thus constitutes a powerful and versatile tool which can be employed in renewable resource complementarity analyses, as well as in planning studies to identify the optimal deployment patterns of renewable generation sites and associated interconnection needs.

The methodology is illustrated on a case study evaluating the potential of and assessing the complementarity between the wind regimes in France and southern Greenland, in the hypothetical context of the development of an interconnection between continental western Europe and Greenland, as part of a broader global grid initiative. Wind speed time series are transformed into power output time series using a transfer function representing the power output of a wind farm. The framework is applied to the resulting power series in order to assess whether the union of the South Greenland and France regions yields better scores of the criticality indicator than either region taken independently. Optimisation problems are also formulated to identify locations with the best complementarity. Finally, the framework is exploited to contrast the distribution of sites in France with highest capacity factors, i.e. those usually selected first for development, with that of the sites with highest complementarity.

The remainder of the article is structured as follows. Section \ref{sec:related_work} reviews related works. Thereafter, Section \ref{sec:formalisation} details the mathematical formalisation of the criticality indicator before deriving several methodologies for VRE resource quality assessment and optimal generation sites selection. Section \ref{sec:applications} presents the case study used to illustrate the usefulness of the criticality indicator, while corresponding results are presented in Section \ref{sec:results}. Lastly, Section \ref{sec:conclusion} concludes the paper and presents research questions to be addressed in future work.

\section{Related Works}
\label{sec:related_work}
The complementarity between dispersed renewable production sites, the associated reduction in renewable production variability that may be achieved by exploiting geographic diversity, and how they relate to energy security in power systems have been studied by several authors. Several different classes of methods have been employed in the literature, as reviewed next.

Standard statistical tools have most often been invoked to assess the complementarity of renewable resources. For instance, the complementarity between wind and solar resources at the regional and country scales has been investigated using descriptive statistics (percentile ranking) \cite{hoicka2011solar}, Monte Carlo methods \cite{MonfortiComplementarityAssessmentMonteCarlo}, and correlation analyses in \cite{WidenCorrelationWindSolarSweden}. In particular, correlation analyses including the use of Pearson, Spearman, and sample coefficients as well as cross-correlations have proved very popular in the literature considering the complementarity of renewable resources in various settings \cite{BettClimatologicalRelationshipsWindSolar}, \cite{JurasComplementarityReliabilitySmallHybrid}, \cite{CantaoHydroComplementarityBrazil}, \cite{DosAnjosLongTermCrossCorrelationBrazil}. An early application of these tools can be found in \cite{GiebelPhD}, which considers the case of large-scale wind farms deployment across Europe. The correlation between wind signals recorded at two different locations is computed and expressed as a function of the distance separating the wind farms. The statistical properties of spatially averaged wind signals are also investigated, uncovering the existence of a smoothing effect, whereby increasing the geographical scope appears to decrease the variability in the spatially averaged signal. More comprehensive analyses of the smoothing effect which may occur when coupling regional systems with high wind penetration through continental interconnections are presented in \cite{LouieChapter} and \cite{LouieCorrelation}. In these two papers, the correlation between regional systems on several time scales and the statistical properties of the aggregate system are studied, showing reduced variability for the aggregate system but highlighting the presence of statistically significant correlation between regional systems due to seasonal and diurnal forcing. In \cite{StMartin}, such seasonal and diurnal trends are removed from wind signals in regions of interest in order to focus on their stochastic fluctuations and study the dependence of spatial correlation scale upon fluctuation frequency. More precisely, in each region, a distance beyond which correlation assumes a small value (compared to unity), known as the correlation length, is computed for signals from which trends slower than given time constants have been removed. Overall, it is found that the correlation length increases as the frequency of fluctuations decreases, with an apparent saturation effect for phenomena occurring on time scales beyond a few hundreds of hours. In other words, high frequency fluctuations (on the order of a few hours at most) can be effectively attenuated by aggregation over small geographical scopes (a few dozens or hundreds of kilometres), whereas phenomena occurring on longer time scales require aggregation over much larger areas (hundreds or thousands of kilometres).

In \cite{KatzensteinWindVariability}, the smoothing effect that may result from integrating wind power plants over large geographical areas is considered using frequency-domain analysis methods introduced in \cite{AptPowerSpectrum} instead of statistical tools. More precisely, the level of smoothing is mapped to a feature of the power spectral density of the aggregate system, highlighting the reduction in short-term variability as the geographical scope increases. A similar study is performed for utility-scale solar power plants in Gujarat in \cite{KlimaSolarVariability}, particularly highlighting diminishing marginal returns in variability reduction as more plants are aggregated within the state. Finally, the use of scalar indices to study the complementarity between renewable resources has been proposed in \cite{BelucoComplementarityIndexSolarHydro} and \cite{LiComplementarityWindSolarOklahoma}. In particular, in \cite{LiComplementarityWindSolarOklahoma}, the index is defined as the product of three partial indices describing temporal aspects, energy content and the amplitude of the signals considered, respectively. This index is then used to quantify complementarity between wind and solar resources across Oklahoma and identify physical factors such as moisture and temperature directly influencing complementarity.

Overall, these studies bring valuable insight into the physics and benefits of aggregating intermittent renewable production plants over geographical areas of varying scopes. Two main conclusions emerge from reviewing the aforementioned literature. Firstly, it appears that aggregation at the continental level may not suffice to reduce correlation between wind plants and therefore variability beyond time scales much longer than a day. This observation highlights the need to start considering the aggregation of renewable resources scattered across several continents, and thus the interest of a power system connecting them. Secondly, none of the methods mentioned above genuinely lends itself to systematic application in a power systems planning context. Indeed, even when given sufficient data, it appears that none of them can accurately capture extreme events heavily impacting system design, let alone be easily leveraged to systematically identify optimal locations sets having the best complementarity. The methodology proposed in this paper addresses these drawbacks and can serve as a tool supporting the planning of global interconnections.

\section{Methodological Framework}
\label{sec:formalisation}
This section details the mathematical formalism underpinning the proposed methodological framework.

\subsection{Formalisation of the Criticality Indicator}
\label{sub:math_formalization}

Starting from the notion of a time window, resource quality assessment criteria are introduced, enabling the definition of so-called critical time windows. This concept is then leveraged to propose a parametrised family of scalar indicators quantifying the complementarity between different resources in space and time.

\subsubsection*{Set of locations}
\label{subsub:set_of_locations}

Let $\mathcal L$ denote the set of all locations being investigated. Each location $l \in \mathcal L $ is defined by its geographical coordinates, i.e. $ l = (\lambda^{lon} , \lambda^{lat} ) \in \mathbb R^2$, corresponding to a pair of longitude and latitude values. We also use $L$ to denote any subset $L \subset \mathcal L$, i.e. $L \in \mathcal P(\mathcal L)$, where $\mathcal P(\mathcal L) $ denotes the power set of $\mathcal L$, the set of all subsets of $\mathcal L$. In the sequel, any location $l$ is considered a potential VRE generation site.

\subsubsection*{Time windows}
\label{subsub:time_windows}

Let $ \mathcal{T} = \{ 0, \ldots, T-1  \} $ represent the discretisation of the time horizon under consideration, where $|\mathcal{T}| = T \in \mathbb N$ is the length of the latter, and let $\delta  \in \{ 1, \ldots ,T \}$ be a time duration less than or equal to $T$. Then, a $\delta$-length time window starting at time $t \in \mathcal{T}$ is defined as

\begin{linenomath}\begin{equation}
w^{\delta}_t = [t, t+\delta - 1] \cap \mathbb  N.
\end{equation}\end{linenomath}
For a given time interval starting at time $T_s$ and finishing at time $T_f > T_s$ such that $(T_f - T_s) \geq \delta $ over which a signal of choice is studied, the set of all $\delta$ time windows that can be extracted is further expressed as

\begin{linenomath}\begin{equation}
\mathcal W_{\delta} = \left\{ w^{\delta}_t | t \in \{T_s, \ldots, T_f-\delta \} \right\}.
\end{equation}\end{linenomath}
Note that $\mathcal W_{\delta}$ is a set of sets of integers.

\subsubsection*{Time series of resource signals}
\label{subsub:measurement_time_series}

Let $\mathcal{R}$ be the set of all available types of intermittent renewable resources, e.g. wind or solar. It is assumed that a single resource $r \in \mathcal R$ can be harvested at each location $l \in \mathcal L$, e.g. either wind or solar but not both simultaneously, which allows for the definition of a surjective mapping $q : \mathcal L \mapsto \mathcal R$ associating its resource $r=q(l)$ to any location $l$. It is worth noting that the mapping $q$ can be defined to associate different resource types to different locations. Then, a signal $ \mathbf{s}^{l,q(l)} \in \mathbb R_{\geq 0}^T$ is defined as a time series representing a renewable resource $r=q(l)$ available at a specific location $l \in \mathcal L$, e.g. wind speed or solar irradiance values. An element of this time series is denoted as $s_t^{l, q(l)}, \mbox{ } \forall t \in \mathcal{T}$. A similar convention will be employed to denote elements of other time series introduced in the remainder of this section.

\subsubsection*{Capacity factor time series and vectors}
\label{subsub:load_factor_time_series_and_vectors}

For any resource $r \in \mathcal{R}$, let $\mathcal{I}^{r}$ be the set of technologies available to harvest it, e.g. various turbine models for wind energy conversion. Then, let $p : \mathcal L \mapsto \bigcup_{r \in \mathcal R} \mathcal I^{r}$ be a mapping associating a conversion technology to each location $l \in \mathcal L$. Accordingly, a transfer function $F^{p(l)}: \mathbb{R}_{\ge 0}^T \mapsto [0, 1]^T$ mapping the raw renewable resource signal $\mathbf{s}^{l, q(l)}$ to a capacity factor time series $\mathbf{u}^{l}$ can be defined, such that the following component-wise relationship holds

\begin{linenomath}\begin{equation}
u_t^{l} = F^{p(l)}\big(s_t^{l,q(l)}\big), \mbox{ } \forall t \in \mathcal{T}.
\end{equation}\end{linenomath}
We also introduce a mapping $\hat{U}_\delta: \mathcal{W}_\delta \times [0, 1]^T \mapsto [0, 1]^\delta $ returning the $\delta$-length truncation of a given capacity factor time series $\mathbf{u}^{l}$ over a time window $w^\delta_t$, such that

\begin{linenomath}\begin{equation} 
\hat{U}_{\delta} \Big({w^{\delta}_t}, \mathbf{u}^{l} \Big) = \left[ u^{l}_t , u^{l}_{t+1}, \ldots, u^{l}_{t+\delta-1} \right].
\end{equation}\end{linenomath}

\subsubsection*{Measure mappings}
\label{subsub:measure_mappings}

For any time window length $\delta$, we define a mapping $N_\delta : [0, 1]^\delta \mapsto [0, 1]$ associating a scalar between zero and one to any vector in the $\delta$-dimensional hypercube $\mathcal{C} = [0, 1]^\delta$. When dealing with capacity factor time series, standard statistical measures like the mean, median or percentiles can be easily accommodated in this framework. In what follows, this mapping will be taken to compute the mean of the elements of an input vector

\begin{linenomath}\begin{equation}
N_{\delta}({\bold v}) = \frac{1}{\delta}  \sum_{i=1}^{\delta} v_i, \mbox{  } \forall {\bold v} \in \mathcal{C}.
\end{equation}\end{linenomath}
\subsubsection*{Critical time windows}
\label{subsub:critical_time_windows}

Let $\alpha \in [0,1]$ be a capacity factor threshold. For a given location $l \in \mathcal L$ and associated capacity factor time series ${\bold  u}^{l}$, we denote by $\Omega^{l}_{\delta\alpha}$ the set of all $(\delta, \alpha)$-critical time windows, such that

\begin{linenomath}\begin{equation}
\Omega^{l}_{\delta\alpha} = \left\{  w_t^\delta \bigg|  w_t^\delta \in \mathcal W_{\delta},   N_{\delta}\bigg( \hat{U}_\delta \Big(w_t^\delta, {\bold  u}^{l} \Big)\bigg)   \leq \alpha \right\}.
\end{equation}\end{linenomath}
Put simply, the set of $(\delta, \alpha)$-critical time windows gathers all $\delta$-length time windows having an average capacity factor value smaller than $\alpha$.

\subsubsection*{Critical locations for a given time window}
\label{subsub:proportion_mapping}

Let $L \in \mathcal P(\mathcal L)$ be a subset of locations, $\delta$ and $\alpha$ an input time window length and a capacity factor threshold, respectively. Using the set of $(\delta, \alpha)$-critical time windows $\Omega^{l}_{\delta\alpha}$ introduced earlier, we define a family of mappings  $\Pi_{\delta\alpha}: \mathcal P(\mathcal L) \times \mathcal W_\delta \mapsto [0,1]$ parametrised by the doublet ($\delta, \alpha$) and returning the proportion of locations in $L$ for which a given time window $w_t^{\delta}$ is $(\delta, \alpha)$-critical. More formally, we have

\begin{linenomath}\begin{equation}
\Pi_{\delta \alpha}(L,w^\delta_t) = \frac{ card \left( \left\{ l \in L | w^\delta_t \in \Omega^{l}_{\delta\alpha}\right\}\right)}{card \left(L\right)},
\end{equation}\end{linenomath}
where $card$ associates its cardinality to any finite set.
\subsubsection*{Spatio-temporal criticality}
\label{subsub:delta_alpha_beta_risk}

Let $\beta \in [ 0, 1 ]$ be a geographical coverage threshold, indicating a proportion of locations within a given geographical area. Based on the previously-defined mappings $\Pi_{\delta \alpha}$, we further introduce a family of mappings $\Gamma_{\delta\alpha\beta}: \mathcal P(\mathcal L) \mapsto [0,1]$ computing the proportion of time windows found to be critical for a proportion of locations greater than $\beta$. In mathematical terms,

\begin{linenomath}\begin{equation}
\Gamma_{\delta\alpha\beta}(L)  = \frac{card \left( \left\{ w_t^\delta \in \mathcal W_\delta | \Pi_{\delta\alpha}(L,w_t^\delta) \geq \beta   \right\}\right)}{ card \left( \mathcal W_\delta\right)}.
\end{equation}\end{linenomath}
This metric will be referred to as the \textit{criticality indicator} or \textit{criticality index} in the sequel. At this stage, it is worth giving more intuition about the physical nature and usefulness of the criticality index, especially in the context of power system planning. 

In physical terms, the geographical coverage threshold $\beta$ defines the proportion of locations which must experience average power production levels below a pre-defined capacity factor threshold value $\alpha$ in order for a time window to be counted as critical. Then, for any instance of the triplet $(\delta, \alpha, \beta)$, the criticality index provides information about the temporal aspect of criticality for a locations set $L$ of choice. For nontrivial instances of the triplet $(\delta, \alpha, \beta)$, when the criticality index has a value close to 1, most time windows are critical for the set of locations considered. This indicates very little complementarity between those locations, and constitutes a poor choice for deploying generation sites. Conversely, if the criticality index has a low value, comparatively few time windows are critical, indicating a set of locations with higher complementarity. In the particular case of $\beta=1.0$, generation levels in all locations have to be below the threshold $\alpha$ to flag the time window as critical across $L$. As $\beta$ decreases, the influence of individual locations in the overall time window criticality assessments expands. For instance, for a $\beta$ value equal to the inverse of the cardinality of a given locations set, if average generation levels below $\alpha$ are recorded at a single location over a given time window, the latter is counted as critical (for the entire system).

An alternative, insightful way of interpreting the value of the criticality index is in terms of empirical probabilities. More precisely, for a given set of locations $L$ and triplet $(\delta, \alpha, \beta)$, the value of the criticality indicator can be understood as the likelihood of obtaining a time window which is $\beta-$critical when drawing a time window uniformly at random from the set of all time windows considered. This interpretation allows for straightforward comparison between different criticality index scores, e.g. if for a given locations set the criticality index value is twice higher than for another set, the probability of encountering critical windows over the former locations set is twice higher.

\subsection{Applications of the Criticality Indicator}
\label{sub:variations_crit_indicator}

In this subsection, a VRE resource complementarity assessment criterion derived from the criticality indicator is proposed, along with a family of optimisation problems identifying sets of locations with maximum complementarity within a given geographical area and over a time period of choice.

\subsubsection{Spatio-temporal criticality analysis}
\label{subsub:indicator_j1}

For specific VRE signal types and associated conversion technologies, successively computing values of the spatio-temporal criticality index $\Gamma_{\delta\alpha\beta}$ for different instances of the triplet $(\delta,\alpha,\beta)$, which will be referred to as control settings in the sequel, enables the direct evaluation of resource quality and complementarity within a spatial domain of interest represented by a set of locations $L \in \mathcal P(\mathcal L)$ and over a time horizon $\mathcal{T}$ of choice.

\subsubsection{Optimal deployment of generation sites}
\label{subsub:indicator_j3}
In order to identify the optimal deployment pattern of $n$ generation sites within a geographical area $L \in \mathcal{L}$, an optimisation problem can be defined in which the spatio-temporal criticality indicator is minimised over all subsets of locations with cardinality $n$ and possibly satisfying constraints enforcing deployment in $k$ arbitrary sub-regions. The optimal deployment pattern is then simply obtained as the argument of the minimum. In the case at hand, optimality can be interpreted as the best level of continuity in the power supply any feasible set of locations can offer.

More formally, let $k \in \mathbb{N}$ denote the number of subsets into which the locations set $L$ considered is partitioned. Moreover, let $\mathcal{P}^k( \mathcal L) = \mathcal{P}( \mathcal L) \times ... \times \mathcal{P}( \mathcal L)$ be the set formed by the Cartesian product of $k$ power sets $\mathcal{P}(\mathcal{L})$, and let $\mathbb{N}^k = \mathbb{N} \times ... \times \mathbb{N}$ be the set of k-tuples of nonnegative integers. Then, let $C_L = \{L_1, ..., L_k\} \in \mathcal{P}^k(\mathcal{L})$ be an arbitrary collection of subsets covering $L$ such that $L = \bigcup_{i=1}^k L_i$ and $L_{i} \cap L_{j} = \emptyset, \mbox{ } \forall i \ne j$. Finally, let $C_n = \{n_1, ..., n_k\} \in \mathbb{N}^k$ be a collection of nonnegative integers, with $n_i$ indicating the number of generation sites to be selected in sub-region $L_i$. The total number $n$ of sites to be deployed across $L$ is thus recovered as $\sum_{i=1}^k n_i = n$. At this point, for any instance of the control settings $(\delta, \alpha, \beta$), the problem of minimising the criticality indicator over all subsets of cardinality $n$ whilst satisfying geographical deployment constraints can be introduced

\begin{linenomath}\begin{align}
\underset{\{L'_{1}, ..., L'_{k}\}} \min \hspace{10pt} &\Gamma_{\delta \alpha \beta}\bigg(\bigcup_{i=1}^{k} L^{'}_{i}\bigg)\\
\mbox{s.t. } & L'_{i} \subset L_i, \mbox{ } i \in \{1, ..., k\}\nonumber\\
& card \left(L'_{i}\right) = n_i, \mbox{ } i \in \{1, ..., k\}\nonumber.
\end{align}\end{linenomath}
In particular, for $k=1$, this optimisation problem reduces to the selection of a single subset of locations of pre-specified cardinality within a starting set $L$ in order to minimise the criticality indicator. Different instances of this optimisation problem can be generated for distinct instances of the parameters $C_L$ and $C_n$. Hence, the optimisation problem can be used to define a convenient mapping $\Gamma_{\delta \alpha \beta}^{min}: \mathcal{P}^k(\mathcal{L}) \times \mathbb{N}^k \mapsto [0, 1]$ returning the minimum value $\Gamma_{\delta \alpha \beta}^{min}(C_L, C_n)$ of the criticality index for any two input parameters $C_L$ and $C_n$, which greatly simplifies notation in future sections.

Taking the argument of the minimum in the above optimisation problem allows to derive the set of locations minimising the criticality index for a given partition of the geographical and deployment constraints as described by the parameters $C_L$ and $C_n$. More precisely, in the case of a single global minimum, taking the argument of the minimum will yield a collection of sets of locations, the union of which directly gives the optimal deployment pattern.
As some natural symmetry may exist in the data, there may not be a single collection of locations sets minimising the criticality indicator. In this context, taking the argument of the minimum returns a set of sets of locations sets, and the most suitable one can for instance be selected by visual inspection on a map. In any case, these technical details become irrelevant when implementing and solving the optimisation problems.

\section{Test Case}
\label{sec:applications}
In this section, a particular application is proposed in order to illustrate the usefulness of the indicator defined in Section \ref{sec:formalisation}. More specifically, a wind regime analysis is conducted in South Greenland and France in the context of an electrical interconnection between Greenland and mainland Europe, as part of a broader global grid. At first, a general assessment of the spatio-temporal criticality observed in wind signals across (i) Greenland, (ii) France and (iii) within their spatial aggregation is carried out via the $\Gamma_{\delta \alpha \beta}$ index. The value of the capacity factor threshold ($\alpha$) is fixed, and the impact of different window lengths ($\delta$) and geographical coverage threshold ($\beta$) values is assessed. Then, the optimal distribution of wind generation sites within those geographical areas is analysed under various deployment constraints.

\subsection{Data Acquisition}
\label{sec:data}

This subsection introduces the wind signal datasets used in the current work and briefly discusses the selection of geographical regions employed in Section \ref{sec:results} analyses. Two distinct data sources are used in the wind signal acquisition process for the geographical regions under consideration, the ERA5 reanalysis and the MAR (\textit{Modèle Atmosphérique Régional}) climate models. Both these sources provide historical wind time series sampled at hourly resolution and covering the last ten years, i.e. 2008-2017. 

Resource data within the boundaries of France is acquired via a climate reanalysis model. That is, a process in which in situ and satellite measurements are interpolated in space and time using numerical weather prediction models, yielding a consistent estimation of meteorological observations over structured, regular grids \cite{gelaro2017modern}. Some of the features that have contributed to the success of such methods in energy applications include their extended temporal coverage (e.g., decades), high temporal resolution (e.g., hourly) and uniform sampling, provision of many meteorological variables (e.g., wind speed and direction, air pressure and humidity, solar irradiance components, etc.) or their free-of-charge availability \cite{rose2015reanalysiswind}. In the context of wind power generation studies, the shortcomings of these reanalysis datasets stem mainly from the relatively coarse spatial resolution of their underlying grids. In complex terrain conditions, this latter aspect limits the accurate replication of local, topography-induced winds patterns \cite{olausson2018eramerra, rose2015reanalysiswind, staffell2016merra, pfenninger2016merra}.

For this study, the reanalysis source used is the state-of-the-art ERA5 dataset, developed by the \textit{European Centre for Medium-Range Weather Forecasts} (ECMWF) through the \textit{Copernicus Climate Change Service} (C3S). Upon complete development of the model, it will include atmospheric parameters at various temporal resolutions (i.e., down to hourly) spanning from 1950 to near real time. Set on a geodesic grid (a grid across the surface of the Earth), data is provided at a spatial resolution of $0.28^{\circ}\times0.28^{\circ}$ \cite{ecmwf2018era5}. This aspect makes spacing over the globe different for parallels (across latitudes) and meridians (across longitudes). On the one hand, parallels are spaced at regular intervals and this implies that the aforementioned resolution in degrees corresponds to a longitudinal distance between nodes equivalent to a constant value of \SI{31}{km}. On the other hand, the convergence of meridians at the poles leads to a node spacing across parallels that varies with latitude. In the particular case of ERA5, this spacing ranges from \SI{0.15}{km} close to the poles to \SI{31}{km} at the equator. For the purpose of the following analysis, hourly-sampled wind time series above continental France are retrieved for an altitude of 100 meters above ground level.

A second approach and source of data, which alleviate some of the previously mentioned limitations of reanalysis models, is used for wind data acquisition over Greenland. To be more specific, an atmospheric model is used to provide an accurate replication of weather phenomena up to the mesoscale level (i.e., usually on a spatial scale of tens to hundreds of kilometers). An atmospheric model is a mathematical representation of the dynamics of atmospheric phenomena, usually based on a set of partial differential equations, solved over limited geographical scopes and time durations via numerical integration schemes. The boundary conditions of the spatial domain usually come from reanalysis data over the same time frame, a feature that enables the preservation of synoptic scale (i.e., in the order of thousands of kilometers) effects that drive the dynamics of lower-level, mesoscale systems.

Within the current work, the regional MAR model is used. Repeatedly validated over Greenland, MAR has been specifically developed for simulating atmospheric conditions over polar regions \cite{fettweis2017reconstructions}. The main advantage of this tool lies in its ability to accurately reproduce, at refined spatial resolution (down to \SI{5}{km}$\times$\SI{5}{km}), particular features of the atmospheric circulation over Greenland, including the local, semi-permanent katabatic flows that may enable high levels of wind power generation. For this study, ERA5 fields (e.g., wind speeds, air temperature, humidity or pressure) are used as forcing at the spatial boundaries above South Greenland to retrieve MAR-based hourly wind time series at 100 meters above ground level.

\begin{figure}[!b]
	\centering
	\includegraphics[width=\columnwidth]{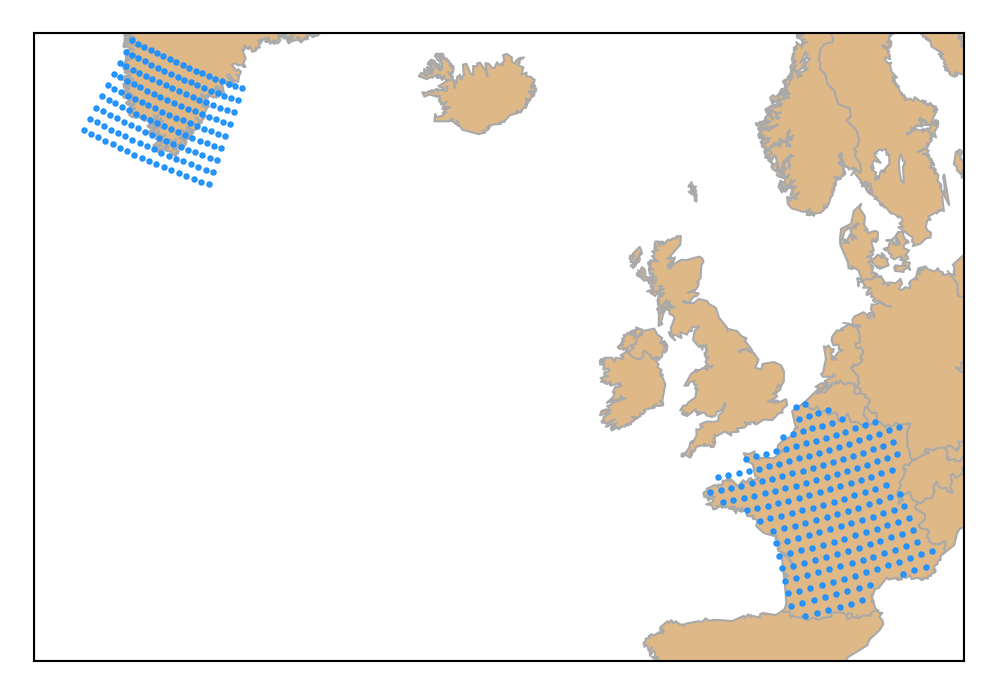}
	\caption{Locations considered for the wind resource assessment in France and South Greenland. The upper-left corner of the map displays the geographical points included in $\mathcal L_{G}$, while the locations superimposed over the French territory comprise $\mathcal L_{F}$.}
	\label{fig:map_data}
\end{figure}

The selection of South Greenland as the focus of this analysis is done via a MAR-based visual inspection of the entire Greenlandic land mass that reveals vast wind resource, relatively high temperatures and favourable topography for the sub-region considered \cite{radu2018greenland}. The locations sets used in the following analysis comprise geographical points which exist in both the ERA5 and MAR grids, respectively. In the following, the set of all locations in France is denoted by $\mathcal L_{F}$, while sites in South Greenland are grouped in $\mathcal L_{G}$. A third set, denoted by $\mathcal L_{FG}$, is defined as the union of the two location sets $\mathcal L_{FG} = \mathcal L_F \cup \mathcal L_G \ .$ All the aforementioned locations sets are depicted in Figure \ref{fig:map_data}.

\subsection{Defining the Conversion Technology}
\label{sec:transfer_function}

\begin{figure}[!t]
	\centering
	\includegraphics[width=\linewidth]{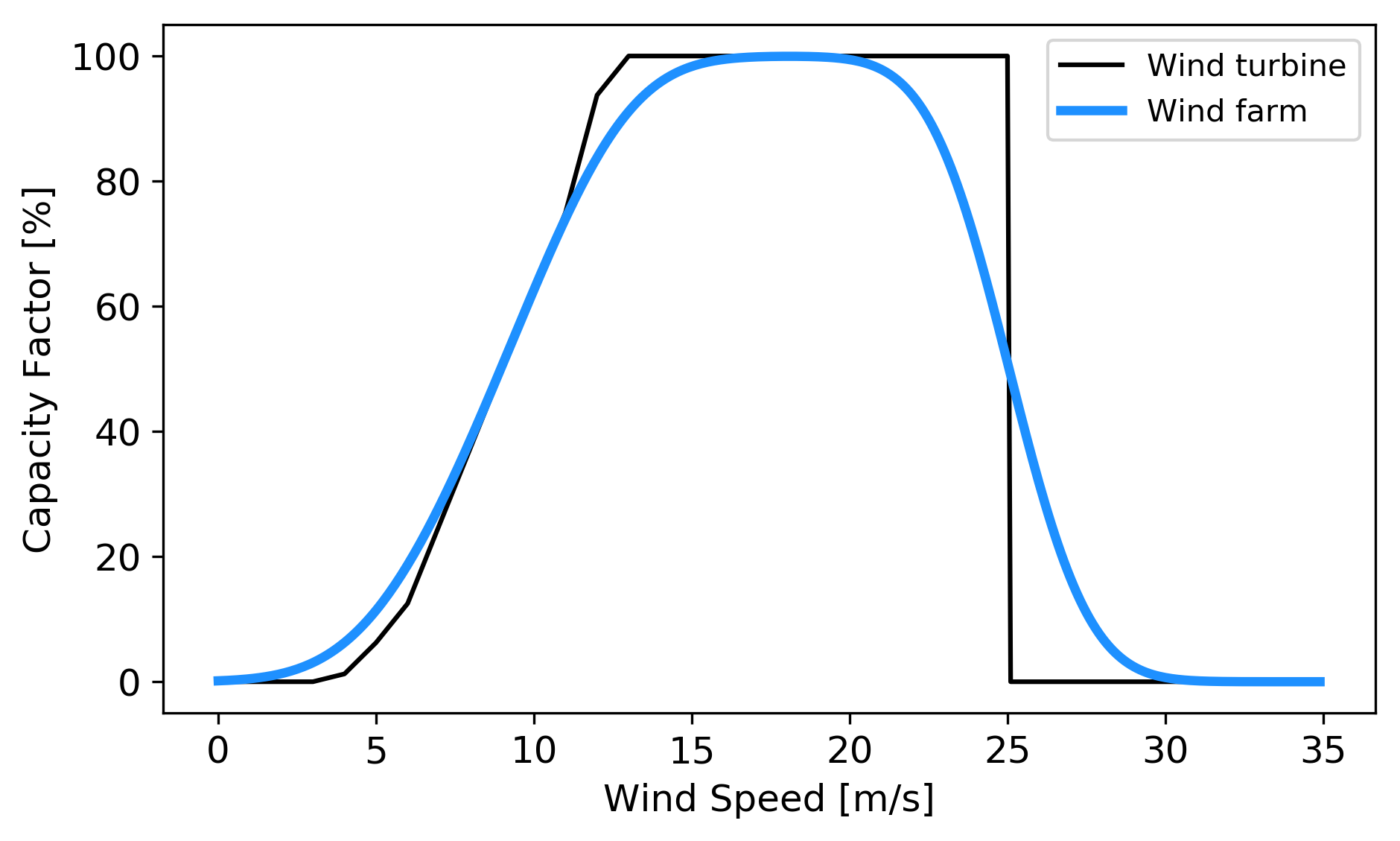}
	\caption{Single turbine and wind farm transfer functions associated with aerodyn SCD 8.0/168 units \cite{aerodyn_scd_8MW}.}\label{fig:transfer_function_curve}
\end{figure}

The mapping of wind speeds to hourly average capacity factors is performed via a transfer function mimicking the normalised power output of a wind farm comprising identical units which are geographically distributed in the direct vicinity of the site of interest. This approach is consistent with power system planning processes, where one is usually interested in developing wind farms rather than constructing a single turbine. As proposed in \cite{staffell2016merra}, such a transfer function is derived from the power curve of a representative wind turbine (the \textit{aerodyn SCD 8.0/168}, in this particular case) by means of a Gaussian fit, where a 100\% availability of the individual units is assumed (i.e., no down times due to maintenance, icing etc.). The result of this approach is depicted in Figure \ref{fig:transfer_function_curve}. In addition, this specific wind energy converter is selected for illustration purposes only, regardless of its appropriateness for deployment at the locations considered.

\section{Results}
\label{sec:results}
This section presents a detailed discussion of results generated applications of the criticality index introduced in Section \ref{sub:variations_crit_indicator}. We start by examining the values taken on by the criticality indicator for various instances of the ($\delta, \alpha, \beta$) triplet within the selected geographical areas. Then, the optimal deployment of generation sites in different geographical set-ups is discussed. Lastly, the current framework is leveraged to highlight potentially undesirable consequences of current power systems planning practices, which primarily favour electricity generation potential and disregard the complementarity in electricity production regimes when selecting wind farm deployment sites.

\subsection{Spatio-temporal Criticality Assessment}
\label{sec:indicator_r0}

Initially, a comprehensive evaluation of wind resource complementarity in the available locations sets over a time period stretching the last ten years (2008-2017) is conducted using the $\Gamma_{\delta \alpha \beta}$ index defined in Section \ref{subsub:indicator_j1}. As a reminder, this indicator provides information regarding the proportion of time windows, within the aforementioned period, during which wind generation is below a certain threshold ($\alpha$) relative to the associated installed capacity, for various time window lengths ($\delta$ - taking values from one hour to ten days), geographical coverage proportions ($\beta$ - from 0.5 to 1.0, in this example) and for different geographical areas (e.g., $\mathcal L_F, \mathcal L_G, \mathcal L_{FG}$). In the following example, as well as in all other applications subsequently proposed, the generation level threshold ($\alpha$) is set to 35\%. This choice stems from it standing between the average capacity factors of the two regions, as computed from available data (i.e., 22\% for all French sites and 48\% for the locations in Greenland). The process of mapping wind signals to normalised power output values is done using the methodology introduced in Section \ref{sec:transfer_function}. Formally, we compute $\Gamma_{\delta \alpha \beta}$ with a capacity factor threshold ($\alpha$) of 0.35 for all locations sets in $\{\mathcal L_F, \mathcal L_G, \mathcal L_{FG}\}$, for all window lengths in $\{1, 24, 72, 168\}$ and for all geographical coverage thresholds in $\{0.5, 0.6,\ldots, 1.0\}$. The associated results are depicted in Figure \ref{fig:r0}.

\begin{figure}[!b]
	\centering
	\includegraphics[width=\linewidth]{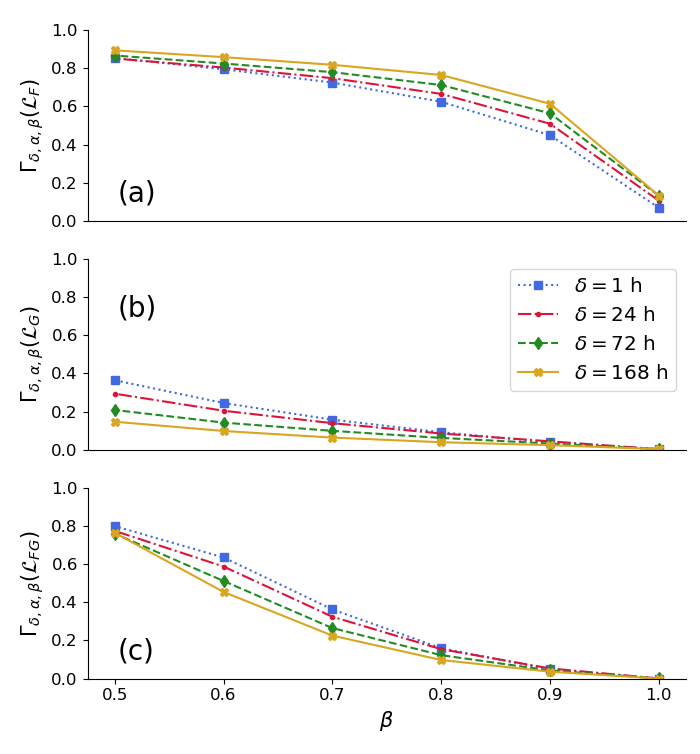}
	\caption{Criticality indicator $\Gamma_{\delta \alpha \beta}$ values for a capacity factor threshold (i.e., $\alpha$) of 35\% for all locations in (a) France ($\mathcal L_F$), (b) Greenland ($\mathcal L_G$) and (c) the aggregation of the two ($\mathcal L_{FG}$).}
	\label{fig:r0}
\end{figure}

First, as can be observed in all subplots of Figure \ref{fig:r0}, the value of the $\Gamma_{\delta \alpha \beta}$ indicator decreases as the geographical coverage threshold ($\beta$) increases, regardless of the time window length ($\delta$) considered. This is happening since time window criticality is easier to achieve at lower $\beta$ values, where less locations are required to have insufficient generation in order for the time window to be critical (see detailed explanation in Section \ref{sub:math_formalization}). In addition, for all locations sets observed, it appears that considering all available generation sites (i.e., $\beta = 1.0$) in the computation of $\Gamma_{\delta \alpha \beta}$ leads to values of the criticality indicator close to zero, an aspect that highlights significant wind regime diversity even at regional scale.

\begin{table}[!t]
	\centering
	\renewcommand{\arraystretch}{1.2}
	\caption{Percent values of the criticality index $\Gamma_{\delta \alpha \beta}$ for various geographical threshold values ($\beta$) and for all considered locations sets, considering a capacity factor threshold ($\alpha$) of 35\% and a window length ($\delta$) of 24 hours.}
	\label{tab:r0}
	\begin{tabular}{c|cccccc}
		\diagbox[width=1.1cm, height=1cm]{$L$}{$\beta$}& 0.5 & 0.6 & 0.7 & 0.8 & 0.9 & 1.0\\ \hline
		$\mathcal L_F$ & 85.1 & 80.4 & 74.8 & 66.6 & 50.9 & 10.6 \\
		$\mathcal L_G$ & 29.4 & 20.4 & 13.9 & 8.5 & 4.4 & 0.3 \\
		$\mathcal L_{FG}$ & 77.4 & 58.8 & 32.4 & 15.4 & 5.3 & 0.1 \\
	\end{tabular}
\end{table}

The change in criticality index values associated with the locations in France ($\mathcal L_F$) for different $(\delta, \alpha, \beta)$ instances is depicted in Figure \ref{fig:r0}a. For a time window length of 24 hours, as the geographical coverage threshold increases from 0.5 to 1.0, the proportion of critical windows drops from 85.1\% ($\beta = 0.5$) to 66.6\% ($\beta = 0.8$) and, finally, to 10.6\% if all locations are simultaneously assessed ($\beta = 1.0$). The same decreasing trend applies for all other $\delta$ values, with the score of the $\Gamma_{\delta \alpha \beta}$ index observed to be increasing as the window length is extended. Results corresponding to the available locations in Greenland are presented in Figure \ref{fig:r0}b. Superior resource quality in this region is evidenced by the range of output values of the criticality index. They show, in case of a $\delta$ value of 24 hours, a drop in the proportion of critical time windows from 29.4\% (for $\beta = 0.5$) to 8.5\% ($\beta=0.8$) and 0.3\%, when taking all available locations into account ($\beta = 1.0$). Contrary to the change in criticality index values for the locations in France, we observe that, for each geographical coverage threshold ($\beta$), the criticality index values in Greenland decrease as the length of the time window increases. 

In this particular ($\delta, \alpha, \beta$) configuration, the opposing development of $\Gamma_{\delta \alpha \beta}$ with respect to $\delta$ stems from (i) the positioning of $\alpha$ between the average capacity factors of the two regions and (ii) the utilisation of a measure mapping returning the mean signal over each time window. Under these assumptions, we observe the following. On the one hand, the chosen $\alpha$ is greater than the average capacity factor observed in France (i.e., 22\%). Given the measure mapping considered, the probability of a random time window sample being critical is greater for large values of $\delta$ since less frequent occurrences of high energy output in this region are averaged over the time window length (thus cancelling the extreme wind events and rendering the time window critical overall). On the other hand, $\alpha$ is smaller than the estimated average capacity factor in Greenland (i.e., 48\%). In this case, the opposite of the aforementioned situation holds, with short time windows being more likely to turn out critical for a given ($\alpha, \beta$) pair. This happens since, for a resource-rich area, low wind events have a greater impact (that is, time window criticality) on short time horizons, while the effects of those same events are often annihilated when computing the measure mapping for larger $\delta$ values.

Lastly, the outcome of coupling the two regions ($\mathcal L_F$ and $\mathcal L_G$) is displayed in Figure \ref{fig:r0}c. As for Greenland, the criticality index decreases as the time window length ($\delta$) increases. It should also be noted that, except for the $\beta=1.0$ case, the criticality indices in Figure \ref{fig:r0}c are, for all $(\delta, \beta)$ configurations, smaller than the ones associated with the locations in France ($\mathcal L_F$) and greater than the ones corresponding to the sites in Greenland ($\mathcal L_G$). This can be explained by the fact that the influence of the inferior wind resource in France is more pronounced for lower values of the geographical coverage threshold. Nevertheless, the impact of the high-quality wind resource of Greenland is observed in the shape of the plotted curves which, compared to the ones associated to France (Figure \ref{fig:r0}a), change curvature, leading to a steeper drop of the $\Gamma_{\delta \alpha \beta}$ values as the $\beta$ factor increases. Numerically, given a window length of 24 hours, the criticality index decreases from 77.4\% ($\beta=0.5$) to 15.4\% ($\beta=0.8$) and, considering all locations ($\beta=1.0$), to 0.1\%. These observations already give a clear indication of the benefits of harvesting wind energy in Greenland in order to complement the existing wind regimes in France. For the sake of clarity, the output values of the criticality indicator for different locations sets, geographical coverage proportions and considering a 24-hour time-window length are summarised in Table \ref{tab:r0}.

\subsection{Optimal Deployment of Generation Sites}
\label{sec:optimisation_rmin}

This subsection details the results of the minimisation problem defined in Section \ref{subsub:indicator_j3}. More specifically, the optimal deployment of $n$ generation sites across $k$ areas is assessed for different input regions (e.g., $\mathcal L_F, \mathcal L_G, \mathcal L_{FG}$), a given triplet $(\delta, \alpha, \beta)$ and taking account of pre-defined constraints on the geographical repartition of wind farms throughout the $k$ sub-regions. A time window length $\delta$ of 168 hours (one week), a capacity factor threshold $\alpha$ of 35\% and a geographical coverage threshold $\beta$ of 1.0 are considered for illustrative purposes in the following example. A heuristic, which is described in detail in the appendix, is proposed to provide an approximate solution to the optimisation problems at hand, and results are shown in Figure \ref{fig:rmin}.

\begin{figure}[!t]
	\centering
	\includegraphics[width=\linewidth]{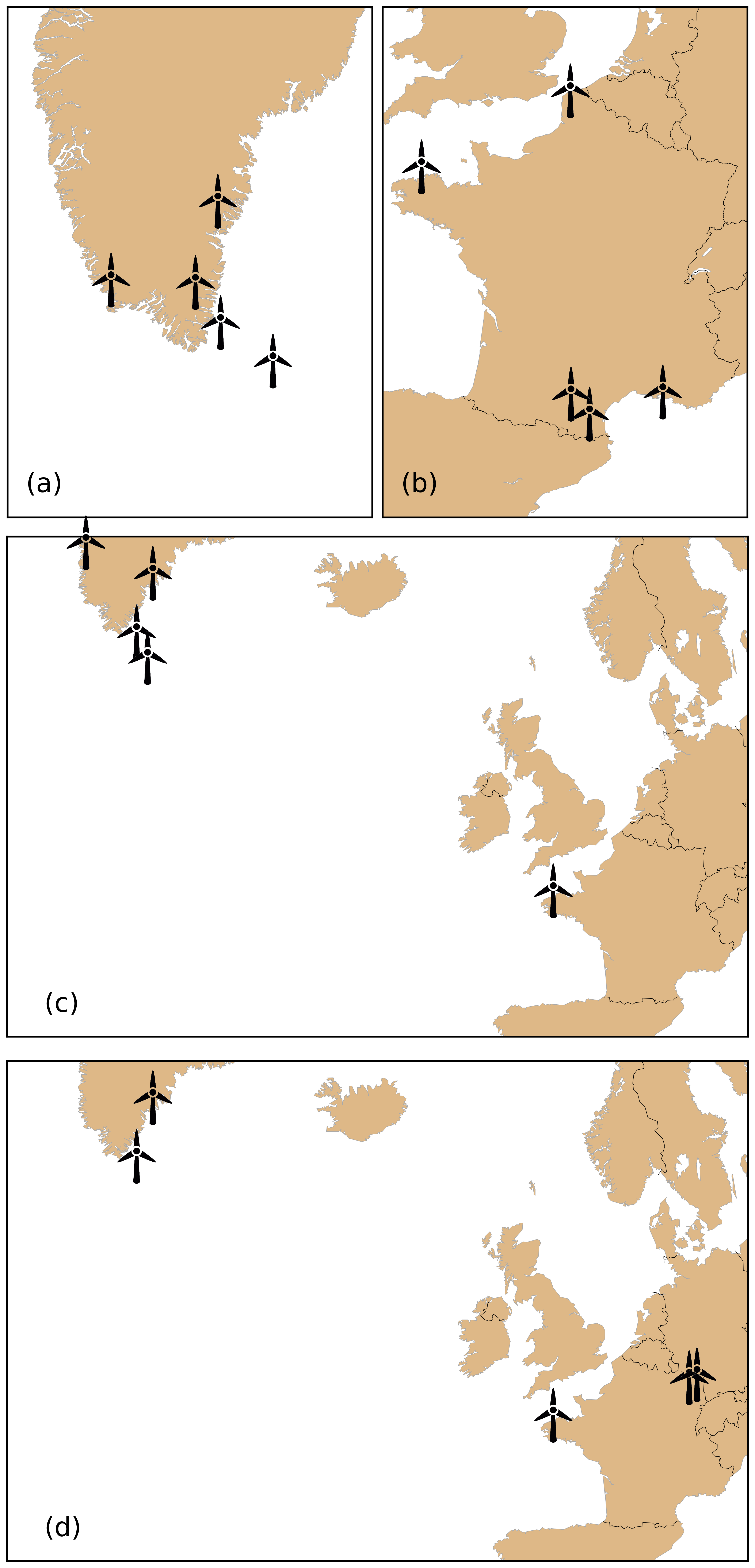}
	\caption{Visualisation of the optimal deployment of five wind farms in (a) Greenland, (b) France, (c) the aggregation of the two without and (d) with constraints on the geographical repartition. Results corresponding to a time window length ($\delta$) of one week, a capacity factor threshold ($\alpha$) of 35\% and a geographical coverage threshold ($\beta$) of 1.0.}
	\label{fig:rmin}
\end{figure}

Optimal deployment of five generation sites within the available locations in southern Greenland and continental France is shown in Figure 4a and Figure 4b, respectively. At first glance, it can be observed that the identified generation sites are evenly distributed over the regions of interest, again revealing (this time visually) the complementarity of wind regimes across dispersed locations even on the regional scale. Also, the actual wind farm locations in both cases can be explained via well-documented, prevailing local or regional wind regimes. For Greenland, the deployment of all but one wind farm is in line with the spatial occurrence of local katabatic flows \cite{radu2018greenland}. Shifting to France, two wind farms are deployed in the resource-richer north (as will be detailed in the following section), while the remaining ones are built in southern areas often swept by strong, local winds (the Mistral and Tramontane) \cite{obermann2016windsfrance}. Numerically, the five locations in France display a 15\% probability of critical window occurrence, while the superior wind resource in Greenland translates to an almost zero (0.4\%) criticality index value.

The results of two variations on the same minimisation problem applied to the aggregated locations set ($\mathcal L_{FG}$) are discussed next. Figure 4c displays the optimal distribution of generation sites in France and Greenland solely based on resource quality, without any additional deployment constraints. This variant corresponds to the case of a single, aggregated input region (i.e., $k=1$). Numerically, the proportion of one-week-long time windows with generation levels under 35\% capacity factor across all five sites is only 0.005\% over the full time horizon. Nevertheless, constraints on the geographical repartition of the generation sites can play a decisive role, as can be observed in Figure 4d. This plot shows the optimal deployment of the five sites in case the number of wind farms to be developed in Greenland is limited to two (an example of the more general problem formulated in Section \ref{subsub:indicator_j3}). In this case, the criticality index score increases to 0.06\%.

\subsection{Comparison to Average Capacity Factors as Primary Criterion}
\label{sec:optimisation_tradeoff}

It is particularly insightful to compare the generation site selection according to different criteria. In particular, the usual criterion used for wind farm deployment is the generation potential (or the average capacity factor) of the site. A comparison between this indicator and the complementarity criterion proposed in this paper is presented in Figure \ref{fig:subopt} for the case of France. In this plot, the black markers are associated with the solution of the minimisation problem, as presented in Section \ref{sec:optimisation_rmin}. The blue ones correspond to the five best locations in $\mathcal L_F$ (see Figure \ref{fig:map_data}), strictly from a generation potential perspective (i.e., locations are ranked based on integrated capacity factor values over the available time horizon). The point displayed in both colours represents a generation site common to both solutions.

Numerically, as detailed in Section \ref{sec:optimisation_rmin}, the optimisation problem returns a criticality index value of 15\%. The criticality index associated with the five most productive sites is higher (23.6\%). In other words, compared to the locations set identified through the minimisation problem, the likelihood of recording a 168-hour long critical window across the five most productive sites is 60\% higher (for a capacity factor threshold of 35\%). This increase stems from the geographical proximity of the locations, which makes them subject to very similar wind regimes. However, the improvement in criticality index value for the locations with highest complementarity (i.e., having increased continuity of supply) comes at the expense of very high capacity factor values. While the five most productive wind farms have an aggregated capacity factor value of 46\% over the entire time domain considered, the locations with highest complementarity only boast an aggregated capacity factor value of 34\%.

\begin{figure}[!t]
	\centering
	\includegraphics[width=\linewidth]{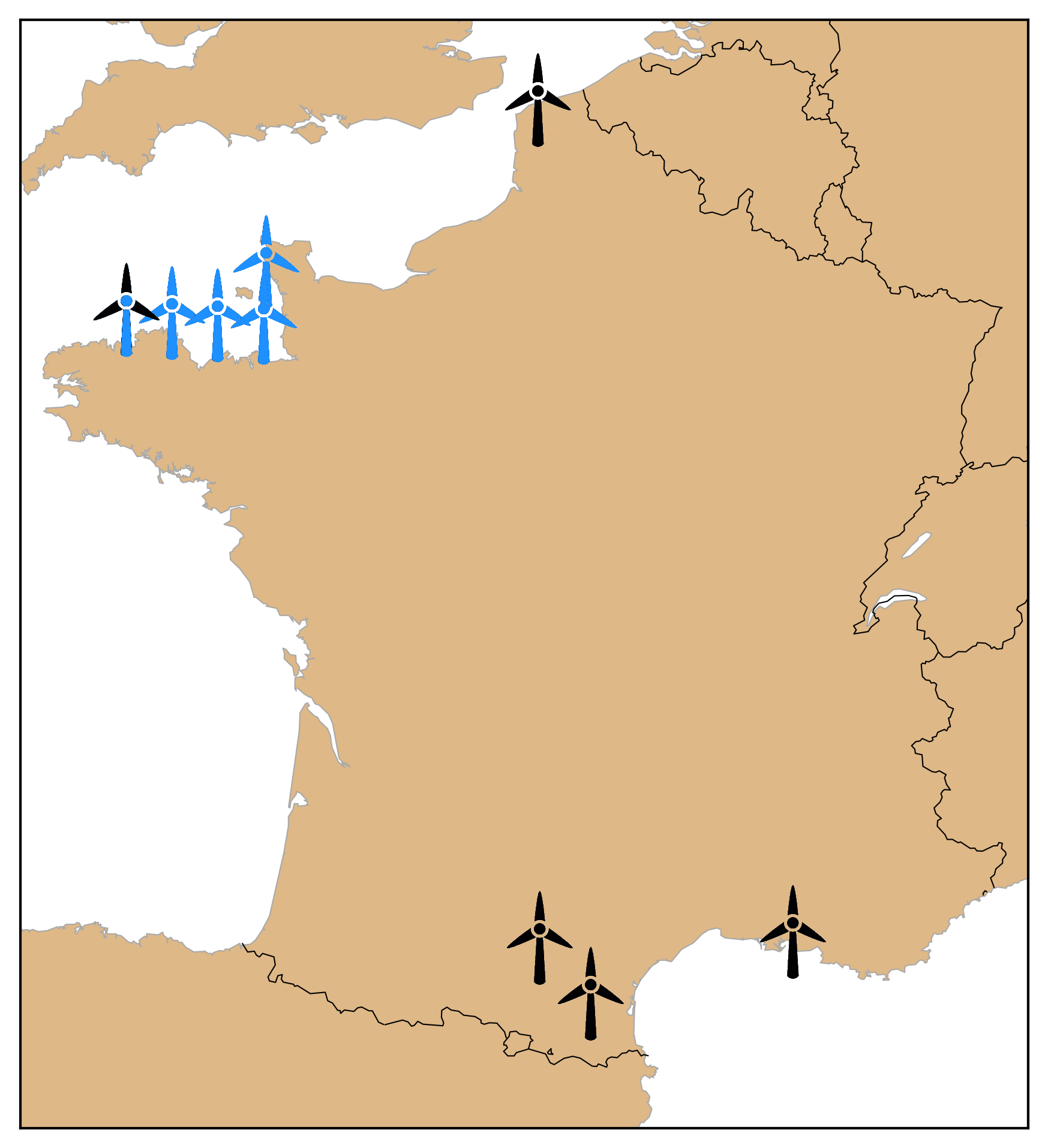}
	\caption{Comparison of criticality index-based (black) and electricity yield-based (\textcolor{NavyBlue}{blue}) deployment of wind farms within a given subset of locations, based on the criticality indicator values. Example depicting the results for the subset of potential generation sites in France, $\mathcal L_F$.}
	\label{fig:subopt}
\end{figure}

Therefore, such an analysis opens the door to planning decisions based on a trade-off between high production levels and enhanced continuity of supply. According to the particularities and requirements of each individual system, one may opt for extensive generation volumes accompanied by bigger time frames of production under certain levels or for lower generation volumes that, for most of the time, behave as base load generation supplied entirely by renewable energy technologies.

\subsection{Results Discussion}
\label{sec:disc}

A key factor impacting the accuracy of results presented in Section \ref{sec:results} is the accuracy of the raw data used throughout the analysis. In this respect, several studies have identified significant spatial bias attributed to the ability of reanalysis models to recreate VRE resource patterns in certain topographies \cite{rose2015reanalysiswind, staffell2016merra, pfenninger2016merra}. This undesirable feature is also observed in the current study that uses ERA5 as reanalysis database. Comparing modelled average capacity factors (as computed from ERA5 resource data) with actual realisations of the regions where the generation sites in Figure \ref{fig:subopt} are located, non-negligible differences can be observed. On the one hand, for the locations in Southern France, the computed average capacity factor is close to 25\%, a value which is just about the official statistics (27.4\% in 2016) \cite{rte2017bilan}. On the other hand, a clear positive bias of the reanalysis model can be observed for the six locations in Northern France. In their case, a modelled average capacity factor of 45\% is significantly higher than the reported 19 to 22\% values associated with the corresponding French regions in 2016 \cite{rte2017bilan}. The sole purpose of the wind database used in this work is to illustrate the mathematical framework proposed in Section \ref{sec:formalisation} and the continuous improvement of reanalysis models over the years suggests that these tools could be used more reliably in such applications in the near future. Nevertheless, final decisions on wind farm site selection should be confirmed by extensive in situ measurements.

One interesting implication of the proposed methodology stems from the results presented in Section \ref{sec:optimisation_tradeoff}, which suggest the existence of a trade-off between maximising energy volumes (and revenues, in current regulatory settings) and maximising continuity of supply (via the criticality index) in the planning process of VRE generation. This aspect brings into question the regulatory frameworks favouring the former option and potential enhancements that can be addressed in this regard. In a regulatory set-up that incentivises producers to provide ancillary network services (e.g., contributing to a base load provision of electricity from renewable sources), novel business cases could emerge by optimally deploying (from a criticality indicator standpoint) production sites in power systems relying heavily on renewable energy generation.

\begin{table}[t]
	\centering
	\renewcommand{\arraystretch}{1.5}
	\caption{Influence of the capacity factor threshold ($\alpha$) on the trade-off between criticality index gain and average capacity factor loss in the context of deployment strategy comparison between (i) criticality index minimisation and (ii) average capacity factor as selection criterion. Numerical results displayed for one-week-long time windows ($\delta=168$) and under full geographical coverage ($\beta=1.0$).}
	\label{tab:alphas}
	\begin{tabular}{lc|ccccc}
		$\alpha$ & - & 0.15   & 0.35   & 0.40   & 0.45 & 0.65   \\
		$\Delta\Gamma_{\delta \alpha \beta}$& \% & 99.6 & 36.4 & 27.6 & 20.4 & 1.9 \\
		$\Delta\hat{\mathbf{u}}$ & \% & 28.2 & 26.0 & 26.0 & 26.0 & 16.6
	\end{tabular}
\end{table}

The influence of $\alpha$ on the aforementioned trade-off reveals further interesting findings. In this respect, Table \ref{tab:alphas} summarises the criticality index gain ($\Delta\Gamma_{\delta \alpha \beta}$) and the capacity factor loss ($\Delta\hat{\mathbf{u}}$) associated with the two deployment strategies identified in Section \ref{sec:optimisation_tradeoff}, for various capacity factor threshold ($\alpha$) values, a time window length ($\delta$) of 168 hours and full geographical coverage ($\beta=1.0$). Both these measures are defined in relative terms, comparing the results of the criticality index-based deployment strategy with the ones corresponding to the traditional, energy-based approach (blue markers in Figure \ref{fig:subopt}). In this case, for a given ($\delta, \alpha, \beta$) triplet, the criticality index gain represents the decrease in the number of critical windows, while the capacity factor loss quantifies the reduction of energy yield over the entire time horizon. As $\alpha$ varies between 0.15 and 0.65, the numbers show a steep, non-linear decrease of the criticality index gain accompanied by a smoother drop in terms of energy generation potential. This evolution can be explained by interpreting the two extremes. For small values of the capacity factor threshold (e.g., $\alpha=0.15$), the outcome of the minimisation problem includes locations distributed evenly across the region. Such a deployment scheme leads to very good criticality index values (and to substantial gains compared to the criticality indicator associated with the classical scheme), but also to relatively high losses in terms of generation potential (since locations with better complementarity are eligible to be chosen over the ones with very good resource). Distinct results occur for large values of the capacity factor threshold (e.g., 0.65). As $\alpha$ increases, locations with superior yield are selected for wind farm deployment and it often happens that at least part of these generation sites are in relative geographical proximity. Therefore, the results display a fairly low criticality index gain accompanied by a limited decrease in generation potential.

\section{Conclusion and Future Work}
\label{sec:conclusion}
A general methodology to systematically assess the complementarity of renewable resources over both space and time has been presented. The framework relies on the concept of critical time windows, which provide an accurate description of extreme production events in time-domain whilst retaining chronological information, and from which a parametrised family of scalar indicators quantifying complementarity is derived. The so-called criticality indicator constitutes a practical and efficient way of evaluating renewable resource complementarity over arbitrary geographical scopes and time scales. The criticality indicator is further leveraged to propose a family of optimisation problems allowing to identify the deployment pattern of generation sites with smallest occurrence of low production events within a region of interest and under arbitrary spatial distribution constraints.

The versatility and usefulness of the proposed framework is highlighted in a case study investigating the complementarity of wind regimes in and between continental France and southern Greenland. The application of the criticality indicator reveals that a reduction in the occurrence of low system-wide VRE generation events can be achieved when the two areas are spatially aggregated, indicating the potential benefits of such intercontinental electrical interconnections. Moreover, the solutions to optimisation problems derived from the criticality indicator show that the occurrence of low power production events can be substantially reduced on a regional scale by exploiting the diversity in local wind regimes. Likewise, aggregation of generation sites across continents allows both to full take advantage of high-quality resources and reduce the occurrence of low production events for the systems of interest. The relevance of the proposed methodology in a power system planning context is further supported by a comparison of two wind farm deployment strategies (one favouring complementarity and the other based solely on average capacity factors) tested in continental France. These two approaches yield widely-varying outcomes, with implications for planning strategies in future power systems dominated by vast shares of renewable-based generation, where maintaining adequate levels of security of electricity supply usually requires a comprehensive assessment of renewable resource complementarity.

Several research avenues can be pursued in future work. From a methodological perspective, extending the formalism to assign an $\alpha$ to each location rather than taking a uniform value across all locations may yield additional insight into the complementarity of renewable resource signals. The framework could also be applied to investigate the complementarity between different renewable resource types. From a computational standpoint, recasting the proposed optimisation problems in a more generic form would be beneficial, as it would allow to use efficient off-the-shelf solvers, e.g. branch-and-bound, which would provide certificates of optimality. If those efforts prove fruitless, a comprehensive analysis and extension of the proposed heuristic to cases in which $\beta \ne 1$ would be needed. Finally, further exploring the trade-off between maximising complementarity or average capacity factor would be particularly useful for planning purposes. More precisely, quantifying the value of complementarity in economic terms could enable to identify whether transmission, dispatchable generation or storage capacity expansion strategies should be pursued to ensure adequacy in a power system with ever-increasing shares of variable renewable resources. In the same vein, updating the optimisation problems to include other constraints and costs, for instance reflecting a desired level of installed renewable generation capacity or the difficulty to connect to existing infrastructure, would allow for a more complete assessment of renewable generation deployment options.

\appendix
\numberwithin{equation}{section}

\section*{Appendix}
\label{sec:Appendix}
This subsection details the heuristic proposed to solve the optimisation problem introduced in Section \ref{sub:variations_crit_indicator} in the full spatial coverage case in which only time windows critical at all locations are counted when computing the criticality indicator, i.e. $\beta = 1.0$. If the geographical area considered numbers only a few dozen locations, an exhaustive search procedure can be implemented to retrieve the exact solution to the optimisation problem. In case several hundreds of locations or more are considered, such brute force approaches are no longer tractable. Hence, an alternative method trading accuracy for speed and approximating the optimum has to be employed, as described next.

The pseudocode in Algorithm \ref{alg:minalg} captures the spirit of the proposed heuristic. In essence, this heuristic relies on the key insight articulated in the following observation:

\begin{observation}[Property of Criticality Index]
	Let $L$ be a reference set of locations, and $\Gamma = \Gamma_{\delta \alpha \beta}\big(L\big)$ the value of the criticality index for arbitrary values of the $(\delta, \alpha)$-pair, arbitrary set $\mathcal{W}_\delta$, and $\beta = 1.0$. Then, $\Gamma \le \Gamma_{\delta \alpha \beta}\big(L\setminus \{l\} \big), \ \forall l \in L$.
	\label{Observation1}
\end{observation}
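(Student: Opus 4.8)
The plan is to prove the monotonicity claim directly from the definitions, by showing that removing a location can only increase (or preserve) the collection of time windows that are counted as critical when $\beta = 1.0$. The key observation is that at $\beta = 1.0$ the criticality indicator counts exactly those time windows $w_t^\delta$ for which \emph{every} location in the set has the window in its critical set; formally, with $\beta = 1.0$ the condition $\Pi_{\delta\alpha}(L, w_t^\delta) \geq 1.0$ is equivalent to $\Pi_{\delta\alpha}(L, w_t^\delta) = 1$, i.e. $w_t^\delta \in \Omega^l_{\delta\alpha}$ for all $l \in L$. In set-theoretic terms, the numerator of $\Gamma_{\delta\alpha\beta}(L)$ is the cardinality of $\bigcap_{l \in L} \Omega^l_{\delta\alpha}$.

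First I would make this reformulation explicit: define $S(L) = \{ w_t^\delta \in \mathcal{W}_\delta \mid \Pi_{\delta\alpha}(L, w_t^\delta) \geq 1.0 \}$ and show $S(L) = \bigcap_{l \in L} \Omega^l_{\delta\alpha}$, so that $\Gamma_{\delta\alpha\beta}(L) = \mathrm{card}(S(L)) / \mathrm{card}(\mathcal{W}_\delta)$. Second, I would fix an arbitrary $l \in L$ and observe the elementary set inclusion $\bigcap_{l' \in L} \Omega^{l'}_{\delta\alpha} \subseteq \bigcap_{l' \in L \setminus \{l\}} \Omega^{l'}_{\delta\alpha}$, which holds because intersecting over a smaller index family can only enlarge the result. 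Third, I would pass to cardinalities: since both sets are finite subsets of $\mathcal{W}_\delta$, the inclusion gives $\mathrm{card}(S(L)) \leq \mathrm{card}(S(L \setminus \{l\}))$, and dividing both sides by the common positive constant $\mathrm{card}(\mathcal{W}_\delta)$ yields $\Gamma_{\delta\alpha\beta}(L) \leq \Gamma_{\delta\alpha\beta}(L \setminus \{l\})$, which is exactly the claim.

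The argument is essentially a one-line monotonicity of intersection once the reformulation is in place, so there is no serious obstacle; the only point requiring a little care is the reformulation step itself, namely checking that $\Pi_{\delta\alpha}(L, w_t^\delta) \geq \beta$ with $\beta = 1.0$ really does collapse to the universal condition $w_t^\delta \in \Omega^l_{\delta\alpha}$ for all $l \in L$. This follows immediately from the definition of $\Pi_{\delta\alpha}$ as a ratio of cardinalities bounded above by $1$: the ratio equals $1$ if and only if the numerator equals the denominator, i.e. every $l \in L$ satisfies $w_t^\delta \in \Omega^l_{\delta\alpha}$. One should also note the degenerate case $L \setminus \{l\} = \emptyset$ (when $|L| = 1$), where $\Pi_{\delta\alpha}$ involves division by $\mathrm{card}(L \setminus \{l\}) = 0$; in that case the convention is either that every window is vacuously critical (so $\Gamma = 1$, and the inequality holds trivially since $\Gamma_{\delta\alpha\beta}(L) \le 1$) or that the statement is understood only for $|L| \ge 2$. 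Either way the inequality is preserved, and I would remark on this briefly rather than dwell on it.
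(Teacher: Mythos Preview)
Your proof is correct and rests on the same key insight as the paper's: at $\beta = 1.0$, a time window is counted in $\Gamma_{\delta\alpha\beta}(L)$ if and only if it is critical for \emph{every} location in $L$, so removing a location can only enlarge the set of counted windows. The paper phrases this as a short contradiction argument while you give the direct set-theoretic version via $\bigcap_{l\in L}\Omega^l_{\delta\alpha}$ and intersection monotonicity (and you additionally handle the degenerate $|L|=1$ case), but the substance is identical.
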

\begin{proof}
	Let us assume there exists a location $\hat{l} \in L$ such that $\Gamma_{\delta \alpha \beta}\big(L\setminus \{\hat{l}\} \big) < \Gamma$. This would imply that at least one time window which was critical for $L$ is no longer critical for the proper subset $L\setminus \{\hat{l}\}$. However, by definition of the criticality index for $\beta = 1.0$, time windows must be critical for every location in $L$ to be counted in the first place, so that they must also be critical for all locations in any non-trivial proper subset of $L$, which leads to a contradiction. Hence, no such $\hat{l}$ as previously hypothesised can be found and $\Gamma \le \Gamma_{\delta \alpha \beta}\big(L\setminus \{l\} \big), \ \forall l \in L$.
\end{proof}

In other words, as locations are successively removed from a starting set, a sequence of non-decreasing values of the criticality index is generated. Given an initial set of locations and a desired cardinality for a terminal set, an iterative procedure exploiting this property can therefore be devised to identify a set of locations approaching the minimum of the criticality index (amongst all location sets with the desired cardinality), and also satisfying partitioning constraints. It is worth noting, however, that this procedure may be suboptimal in that it may yield a set of locations not exactly minimising the criticality indicator.

\begin{algorithm}[t]
	\caption{Minimisation problem for $\beta=1.0$}\label{alg:minalg}
	\begin{algorithmic}
		\STATE \textbf{Input:} $k,\  \{(L_1, n_1),..., (L_k, n_k)\}$
	\end{algorithmic}
	\begin{spacing}{1.2}
		\begin{algorithmic}[1]
			\STATE $n \gets \sum_{i=1}^{k} n_{i}$
			\STATE $J_{i} \gets L_{i}, \mbox{ } \forall i \in \{1, ..., k\}$
			\WHILE {$ Card\big(\bigcup_{i=1}^{k} J_{i}\big) > n$}
			\STATE $J^* \gets \bigcup_{i=1}^{k} J_{i}$
			\STATE $l^{*} \gets \underset{\begin{tiny} \{l\} \in J^*\end{tiny}}{\argmin} \underset{\begin{tiny}\mbox{s.t. \ } card \left( J_{i}\setminus\{l\} \right) \ge n_{i}, \forall i \in \{1, ..., k\}\end{tiny}}{\Gamma_{\delta \alpha \beta}\big(J^*\setminus\{l\}\big)}$
			\FOR{$i \gets 1$ \TO $k$}
			\IF{$l^{*} \in J_{i}$}
			\STATE $J_{i} \gets J_{i}\setminus\{l^{*}\}$
			\ENDIF 
			\ENDFOR
			\ENDWHILE
		\end{algorithmic}
	\end{spacing}
	\vspace{0.1cm}
	\begin{algorithmic}
		\STATE \textbf{Output:} $\{J_{1}, ..., J_k\}$
	\end{algorithmic}
\end{algorithm}

The algorithm proceeds as follows. Starting from an arbitrary partition of a geographical area of interest, expressed as a collection of $k$ locations sets augmented with the minimum number of generating units to be deployed in each of them, $\{(L_1, n_1),..., (L_k, n_k)\}$, the cardinality $n$ of the desired terminal location set is computed. The iterative procedure to trim down the size of the initial locations set to the desired cardinality is then initiated. At each iteration, an individual location is selected and removed from the current locations set, and the cardinality of the latter is compared against the desired cardinality $n$. To be more accurate, the location discarded at each iteration is selected as to minimise the difference (in absolute value) between the criticality index value computed at the previous step and that at the current step. More formally, if $J^*$ stands for the current locations set and $\Gamma^* = \Gamma_{\delta \alpha \beta}\big(J^*\big)$ denotes the value of the criticality indicator associated with the current locations set, the location to be discarded can be obtained as a solution to the following optimisation problem

\begin{linenomath}\begin{align}
l^* = \underset{\{l\} \in J^*}{\argmin} \hspace{10pt} &\big|\Gamma^* - \Gamma_{\delta \alpha \beta}\big(J^*\setminus\{l\}\big)\big| \\
\mbox{s.t. \ } & card \left( J_{i}\setminus\{l\} \right) \ge n_{i}, \ \forall i \in \{1, ..., k\}\nonumber.
\end{align}\end{linenomath}
Exploiting the previously-introduced property of the criticality index to get rid of the absolute value and dropping the constant first term in the objective function yields the more practical form

\begin{linenomath}\begin{align}
l^* = \underset{\{l\} \in J^*}{\argmin} \hspace{10pt} & \Gamma_{\delta \alpha \beta}\big(J^*\setminus\{l\}\big) \\
\mbox{s.t. \ } & card \left( J_{i}\setminus\{l\} \right) \ge n_{i}, \forall i \in \{1, ..., k\}\nonumber.
\end{align}\end{linenomath}
This optimisation problem is solved by enumeration and all individual locations are sorted according to their scores. The location with the minimum score is then selected and discarded, and in case of ties, the first location encountered is arbitrarily dropped. This procedure is repeated until the cardinality of the current locations set matches the desired cardinality $n$, at which point the algorithm terminates.

\section*{References}

\bibliography{Bibliographies}

\end{document}